\newcommand{\1}{\mathbbm{1}}
\newcommand{\Rows}{\operatorname{Rows}}
\newcommand{\F}{{\mathbb F}}
\numberwithin{equation}{section}
\newtheorem{theorem}{Theorem}[section]
\newtheorem{lemma}[theorem]{Lemma}
\newtheorem{proposition}[theorem]{Proposition}
\newtheorem{corollary}[theorem]{Corollary}
\theoremstyle{definition}
\newtheorem{definition}[theorem]{Definition} 
\newtheorem{remark}[theorem]{Remark}
\newtheorem{example}[theorem]{Example}
\newcommand{\rmv}[1]{}
\DeclareMathOperator{\hull}{Hull}
\DeclareMathOperator{\Span}{Span}
\DeclareMathOperator{\Rowsp}{Rowsp}
\newcommand{\re}[1]{(\ref{e:#1})}
\theoremstyle{empty}
\title{Binary Triorthogonal and CSS-T Codes for Quantum Error Correction}
\newcommand{\linebreakand}{%
  \end{@IEEEauthorhalign}
  \hfill\mbox{}\par
  \mbox{}\hfill\begin{@IEEEauthorhalign}
}
\author{
  \IEEEauthorblockN{Eduardo Camps-Moreno}
  \IEEEauthorblockA{\textit{Department of Mathematics} \\
    \textit{Virginia Tech}\\
    Blacksburg, VA USA \\
    e.camps@vt.edu}
  \and
  \IEEEauthorblockN{Hiram H. López}
  \IEEEauthorblockA{\textit{Department of Mathematics} \\
    \textit{Virginia Tech}\\
    Blacksburg, VA USA \\
   hhlopez@vt.edu}
    \and 
  \IEEEauthorblockN{Gretchen L. Matthews}
  \IEEEauthorblockA{\textit{Department of Mathematics} \\
    \textit{Virginia Tech}\\
    Blacksburg, VA USA \\
    gmatthews@vt.edu}
  \linebreakand 
\IEEEauthorblockN{Diego Ruano}
  \IEEEauthorblockA{\textit{
  IMUVA-Mathematics Research Institute} \\
    \textit{Universidad de Valladolid}\\
    Valladolid, Spain \\
    diego.ruano@uva.es}
\and 
\IEEEauthorblockN{Rodrigo San-José}
  \IEEEauthorblockA{\textit{
  IMUVA-Mathematics Research Institute} \\
    \textit{Universidad de Valladolid}\\
    Valladolid, Spain \\
    rodrigo.san-jose@uva.es}
  \linebreakand
  \IEEEauthorblockN{Ivan Soprunov}
  \IEEEauthorblockA{\textit{Department of Mathematics and Statistics} \\
    \textit{Cleveland State University}\\
    Cleveland, OH, USA \\
    i.soprunov@csuohio.edu}
}
\begin{document}

\maketitle

\begin{abstract}
In this paper, we study binary triorthogonal codes and their relation to CSS-T quantum codes. We characterize the binary triorthogonal codes that are minimal or maximal with respect to the CSS-T poset, and we also study how to derive new triorthogonal matrices from existing ones. Given a binary triorthogonal matrix, we characterize which of its equivalent matrices are also triorthogonal. As a consequence, we show that a binary triorthogonal matrix uniquely determines the parameters of the corresponding triorthogonal quantum code, meaning that any other equivalent matrix that is also triorthogonal gives rise to a triorthogonal quantum code with the same parameters.
\end{abstract}

\begin{IEEEkeywords}
Triorthogonal codes, quantum codes, CSS construction, linear codes
\end{IEEEkeywords}

\section{Introduction} \label{sec:intro}

Due to noise and decoherence, quantum error-correction is required to achieve quantum fault-tolerant computation. The most well-known construction of quantum error-correcting codes (QECCs) is the CSS construction, obtained independently in the works of Calderbank and Shor \cite{CalderbankShor_96} and Steane \cite{Steane_96}. Since this construction requires a nested pair of classical codes $C_2\subseteq C_1$, it provides a bridge between QECCs and classical coding theory.

Due to Eastin–Knill theorem \cite{eastinknill}, it is not possible to find a QECC that implements a universal gate set transversely. One strategy to circumvent this limitation is to use magic state distillation protocols to implement a logical non-Clifford gate, usually the $T$ gate \cite{bravyikitaevMagicStateDistillation}, which requires QECCs implementing the logical $T$ gate. In this context, triorthogonal codes and CSS-T codes have gained attention because of their potential for magic state distillation. Triorthogonal matrices were introduced by Bravyi and Haah \cite{bravyiTriorthogonalOriginal} as binary matrices in which the common supports of all pairs and of all triples of rows have even cardinalities. From such a matrix, one can construct a QECC that implements the logical $T$ gate when applying a physical transversal $T$ gate, up to a possible Clifford correction. In fact, if one wants to avoid the Clifford correction, then \cite{rengaswamyOptimalityCSST} shows that triorthogonality, plus some weight conditions, is a necessary and sufficient condition to obtain the logical $T$ from the physical transversal $T$ gate. 

CSS-T codes are a generalization of triorthogonal codes introduced in \cite{rengaswamyOptimalityCSST}. These codes only require the physical transversal $T$ gate to induce some logical operation on the logical qubits, which might not be the logical $T$. In \cite{CLMRSS_CSST_24}, an alternative characterization of binary CSS-T codes is given, and the poset of binary CSS-T pairs is introduced. By definition, triorthogonal codes are CSS-T codes, and some of the connections between both types of codes are studied in \cite{rengaswamyOptimalityCSST}. In this paper, we further explore the relations between binary triorthogonal and binary CSS-T quantum codes, and we study how binary triorthogonal codes fit within the binary CSS-T poset.

The main contributions of this paper include 
\begin{itemize}
    \item a propagation rule for triorthogonal codes: Theorem \ref{T:step} and Corollary \ref{C:prop} shows that if $C$ is 
    linear code giving rise to a triorthogonal $[[n,k,d]]$ QECC and $v\in(C^{\ast 2})^\perp \setminus C$ is a vector of odd weight, then $C+\langle v\rangle$ yields a triorthogonal $[[n,k+1,d]]$ QECC via the CSS construction.
    \item a description of the poset of 
linear codes giving rise to triorthogonal QECCs via the CSS construction (see Definition \ref{def:tri_poset}) with minimal and maximal elements given in Theorem \ref{T:min_max}.
\end{itemize}

This paper is organized as follows. Section \ref{S:prelim} covers the necessary background. Section \ref{S:poset} discusses the poset of triorthogonal codes and some relations to the CSS-T poset. A conclusion is provided in Section \ref{S:conclusion}.

\section{Preliminaries}
\label{S:prelim}

We use the standard notation for finite fields and matrices:
$\F_2$ denotes the finite field with two elements $0$ and $1$; $\F_2^{m \times n}$ denotes the set of $m \times n$ matrices with entries in $\F_2$; and $\F_2^n:=\F_2^{1 \times n}$. Given $u, v \in \F_2^n$, $u \cdot v:=\sum_{i=1}^n u_iv_i \in \F_2$ denotes their usual dot product. The weight of a vector $v \in \F_2^n$ is taken to be its Hamming weight $wt(v):=\sum_{i=1}^n v_i$. Sometimes we write $|v|=wt(v)\mod 2$, or, equivalently, $|v|=\langle v,v\rangle$.  The set of rows of a matrix $M \in \F_2^{m \times n}$ is denoted by $\Rows(M)$, the rowspace of $M$ is denoted by $\Rowsp(M)$, and the transpose of $M$ is $M^t \in \F_2^{n \times m}$. The all-ones vector is $\1:=(1,\ldots,1) \in \F_2^n$, and the $k \times k$ identity matrix is denoted by $I_k$. We also use the standard notation from coding theory. A binary linear code $C$ of length $n$, dimension $k$, and minimum distance $d$ is a $k$-dimensional $\F_2$-subspace of $\F_2^n$ in which the minimum Hamming weight of a nonzero codeword is $d$, and referred to as an $[n,k,d]$ code.  The dual of such a code $C$ is an $[n,n-k,d']$ code $C^{\perp}:=\left\{ u \in \F_2: u \cdot c = 0 \ \forall c \in C \right\}$. A generator matrix for $C$ is a matrix $G \in \F_2^{k \times n}$ where $n \geq k$ with $\Rowsp(G)=C$. 
In this paper, we restrict our attention to binary linear codes and, hence, say code to mean a binary linear code. The relative hull of a code $C_1$ with respect to another code $C_2$ of the same length is $$\hull_{C_2}(C_1):=C_1 \cap C_2^{\perp},$$ 
and the hull of a code $C$ is the relative hull of $C$ with respect to itself:
$$\hull(C):=C \cap C^{\perp}.$$ The Schur product of vectors $u=(u_1,\ldots,u_n)$ and $v=(v_1,\ldots,v_n)$ in $\F_2^n$ is the vector
$$u \star v := (u_1v_1,\ldots, u_nv_n) \in \F_2^n.$$ 
The Schur product of codes $C, C' \subseteq \F_2^n$ is the code 
$$
C \star C':=\Span \left\{ c \star c': c \in C, c' \in C'\right\} \subseteq \F_2^n.
$$
The square of the code $C$ is $C^{\star 2}:=C \star C$, and for a positive integer $i$, the $i$-th power of $C$ is $C^{\star i}:=\underbrace{C \star \cdots \star C}_{i \ \textnormal{times}}$. 

Codes $C_1$ and $C_2$ satisfying $C_2 \subseteq C_1$ may be used to define a  quantum stabilizer code $Q(C_1,C_2)$ via the CSS construction \cite{CalderbankShor_96, Steane_96}. 
The CSS code $Q(C_1,C_2)$ is an
$$[[n,k_1-k_2,\ge \min\{d_1,d_2^\perp\}]]$$ quantum code, where $C_i$ is an $[n,k_i,d_k]$ code, for $i=1,2$, and $d_2^\perp$ is the minimum distance of $C_2^\perp$. It was shown in \cite[Theorem 2.3]{CLMRSS_CSST_24} that such a code is CSS-T if and only if 
\begin{equation} \label{eq:CSST}
C_2 \subseteq \hull_{C_1^{\star 2}}C_1=C_1 \cap \left(C_1^{\star 2}\right)^{\perp}. 
\end{equation}
In this case, we refer to $(C_1,C_2)$ as a CSS-T pair. Moreover, we can say more about the parameters of the resulting quantum code. 

\begin{proposition} \cite[Corollary 2.5]{CLMRSS_CSST_24} \label{prop:CSST_params}
If $(C_1,C_2)$ is a CSS-T pair, then $Q(C_1,C_2)$ is an $[[n,k_1-k_2,\ge d_2^\perp]]$ code.  
\end{proposition}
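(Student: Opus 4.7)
The plan is to start from the standard CSS construction, which already yields an $[[n, k_1 - k_2, \geq \min\{d_1, d_2^\perp\}]]$ quantum code, and to show that under the CSS-T hypothesis the quantity $\min\{d_1, d_2^\perp\}$ collapses to $d_2^\perp$. Equivalently, I would argue that the CSS-T condition forces $d_1 \geq d_2^\perp$, so the $d_1$ term is never the binding one.

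The key observation is that over $\F_2$ every vector $c \in \F_2^n$ satisfies $c \star c = c$ coordinatewise (since $0^2=0$ and $1^2=1$). Hence for any binary code $C_1$ we automatically have the inclusion $C_1 \subseteq C_1^{\star 2}$. Combining this with the CSS-T hypothesis \eqref{eq:CSST}, namely $C_2 \subseteq (C_1^{\star 2})^\perp$, and dualizing to get $C_1^{\star 2} \subseteq C_2^\perp$, I obtain the chain
\begin{equation*}
C_1 \;\subseteq\; C_1^{\star 2} \;\subseteq\; C_2^\perp.
\end{equation*}
In particular, every nonzero codeword of $C_1$ is a nonzero codeword of $C_2^\perp$, and therefore its Hamming weight is at least $d_2^\perp$. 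Taking the minimum over all nonzero codewords of $C_1$ yields $d_1 \geq d_2^\perp$.

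Plugging this back into the CSS distance bound gives $\min\{d_1, d_2^\perp\} = d_2^\perp$, and the parameters of $Q(C_1, C_2)$ become $[[n, k_1 - k_2, \geq d_2^\perp]]$ as claimed. I do not see a genuine obstacle here: the argument is essentially a one-line manipulation once one notices the idempotence $c \star c = c$ in characteristic two, which is the feature that makes the CSS-T inclusion \eqref{eq:CSST} strong enough to absorb $d_1$ into $d_2^\perp$. The only thing to be careful about is the direction of the dual inclusion (passing from $C_2 \subseteq (C_1^{\star 2})^\perp$ to $C_1^{\star 2} \subseteq C_2^\perp$), but this is standard.
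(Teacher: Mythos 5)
Your argument is correct. The paper itself does not prove this proposition—it cites \cite[Corollary~2.5]{CLMRSS_CSST_24}—so there is no in-paper proof to compare against; but your derivation is exactly the natural one and is complete. The two ingredients you identify are the right ones: the characteristic-two idempotence $c\star c=c$ gives $C_1\subseteq C_1^{\star 2}$, and dualizing the CSS-T inclusion $C_2\subseteq (C_1^{\star 2})^\perp$ gives $C_1^{\star 2}\subseteq C_2^\perp$. Chaining these yields $C_1\subseteq C_2^\perp$, so every nonzero codeword of $C_1$ has weight at least $d_2^\perp$, i.e.\ $d_1\geq d_2^\perp$, and the CSS bound $\min\{d_1,d_2^\perp\}$ collapses to $d_2^\perp$. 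One could phrase the final step slightly more directly by noting that $C_1\subseteq C_2^\perp$ places both $C_1\setminus C_2$ and $C_2^\perp\setminus C_1^\perp$ inside $C_2^\perp\setminus\{0\}$, so the true CSS distance is bounded by $d_2^\perp$ without going through $d_1$ at all—but this is only a cosmetic rearrangement of the same idea, and what you wrote is fine.
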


\section{Poset of triorthogonal codes}
\label{S:poset}

In this section, we consider binary triorthogonal matrices and the codes they define. 

\begin{definition}
A binary matrix $G$ is called {\it triorthogonal} if for every triple of distinct $u,v,w\in\Rows(G)$ the Schur products $u\star v$ and $u\star v\star w$ have even weights.
A binary linear code $C\subset\F_2^n$ is called {\it triorthogonal} if it has a triorthogonal generator matrix.
\end{definition}

\begin{remark} \label{rem:matrix_rep}
Note that a binary linear code may have several triorthogonal generator matrices $G$. However, the span of the even weighted rows of $G$ is unique and equals the hull of $C$, as shown in \cite[Lemma 1]{bravyiTriorthogonalOriginal};
that is, given any triorthogonal generator matrix $G \in \F_2^{k \times n}$ of a code $C$, one has
$$
\Rowsp \left( G_0 \right) =\hull(C),
$$
regardless of the choice of $G$, where $G_0 \in \F_2^{k' \times n}$ denotes the submatrix of $G$ consisting of all even weighted rows of $G$. We do
not distinguish between matrices with the same set of rows, only permuted, as they generate the same code. Hence, we may assume that given a generator matrix $G$, its rows are ordered so that 
\begin{equation}\label{e:trimat}
G=\begin{pmatrix}
G_1\\
G_0
\end{pmatrix}
\end{equation}
where the rows of $G_1$ are all of odd weight and the rows of $G_0$ are all of even weight.
\end{remark}

As mentioned above, CSS-T codes are a generalization of triorthogonal codes. This can be concluded from the following proposition.

\begin{proposition}\label{prop:tri_is_csst}
Let $C$ be a binary triorthogonal code. Then $$C^{\star 2}\subseteq C+C^\perp.$$
As a consequence, if we let $C_1=C$, $C_2=\hull(C)$, then $(C_1,C_2)$ is a CSS-T pair.
\end{proposition}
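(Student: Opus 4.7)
The plan is to split the proposition into two parts: first, prove the inclusion $C^{\star 2}\subseteq C+C^\perp$; then, deduce from it that the pair $(C,\hull(C))$ satisfies the CSS-T characterization~\eqref{eq:CSST}. Both parts reduce to simple bilinear computations once we fix a triorthogonal generator matrix of $C$.

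For the main inclusion, I would fix a triorthogonal generator matrix $G$ of $C$ with rows $g_1,\dots,g_k$. Since $C^{\star 2}$ is spanned by elements of the form $c\star c'$ with $c,c'\in C$, expanding in this basis yields a sum $\sum_{i,j}a_ib_j\,(g_i\star g_j)$, so it is enough to show that each $g_i\star g_j$ lies in $C+C^\perp$. The diagonal case $i=j$ is immediate because $g_i\star g_i=g_i\in C$. For $i\neq j$ I would prove the stronger statement $g_i\star g_j\in C^\perp$ by pairing it against each generator $g_\ell$: the inner product equals $\wt(g_i\star g_j\star g_\ell)\bmod 2$, and triorthogonality forces it to vanish in every case. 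When $\ell\notin\{i,j\}$ the triple condition gives even weight directly; when $\ell\in\{i,j\}$ the Schur product collapses (because $g_\ell\star g_\ell=g_\ell$ over $\F_2$) to $g_i\star g_j$, whose weight is even by the pair condition.

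Granting the main inclusion, the CSS-T consequence follows quickly. Taking $C_1=C$ and $C_2=\hull(C)$, the characterization~\eqref{eq:CSST} requires $\hull(C)\subseteq C\cap (C^{\star 2})^\perp$. The inclusion into $C$ is immediate. For the inclusion into $(C^{\star 2})^\perp$, any $h\in\hull(C)=C\cap C^\perp$ is orthogonal to both $C$ and $C^\perp$, hence to $C+C^\perp$; since $C^{\star 2}\subseteq C+C^\perp$ by the first step, this gives $h\in(C^{\star 2})^\perp$. There is no deep obstacle, but one must be careful that the argument genuinely uses both the pair and the triple conditions in the definition of triorthogonality: the diagonal cases $\ell\in\{i,j\}$ must be handled separately via the pair condition, since the triple condition applies only to pairwise-distinct rows.
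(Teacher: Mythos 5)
Your proof is correct and takes essentially the same approach as the paper: fix a triorthogonal generator matrix, reduce by bilinearity to Schur products of pairs of rows, handle the diagonal via $g_i\star g_i=g_i\in C$ and the off-diagonal by showing $g_i\star g_j\in C^\perp$ through triorthogonality, then derive the CSS-T containment by dualizing. Your version is slightly more explicit than the paper's in spelling out that the check $(g_i\star g_j)\cdot g_\ell=0$ needs the pair condition when $\ell\in\{i,j\}$ (via idempotence) and the triple condition when $\ell\notin\{i,j\}$; the paper compresses this into ``for any $w\in C$ the vector $u_i\star v_i\star w$ is even weighted,'' which implicitly relies on the same case split.
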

\begin{proof}
Let $G$ be a triorthogonal generator matrix for $C$. Let $x=\sum_{i=1}^k u_i\star v_i$ be an element of $C^{\star 2}$ for some $u_i,v_i\in \Rows(G)$. If $u_i=v_i$ then $u_i\star v_i=u_i\star u_i=u_i\in C$. If $u_i\neq v_i$ then for any $w\in C$ the vector $u_i\star v_i\star w$ is even weighted and, hence, $(u_i\star v_i)\cdot w=0$. This shows that $u_i\star v_i\in C^\perp$. Therefore, $x\in C+C^\perp$.

Let $C_1$ and $C_2$ be as in the statement of the proposition. Then $(C_1^{\star 2})^\perp\supseteq \hull(C_1)$ by the previous reasoning, which implies
$$
\hull_{C_1^{\star 2}}(C_1)\supseteq \hull(C_1)=C_2.
$$
We finish by recalling Equation (\ref{eq:CSST}).
\end{proof}

Before continuing, let us clarify that Proposition \ref{prop:tri_is_csst} does not characterize triorthogonal codes. The next example demonstrates that the condition $C^{\star 2}\subseteq C+C^\perp$ (in fact, even  $C^{\star 2} = C+C^\perp$) is not sufficient to guarantee that $C$ is triorthogonal.

\begin{example}\rm
   Consider the binary code $C$ with generator matrix $G=\begin{pmatrix} 1&1&0&0&0\\ 0&1&1&0&0\\ 0&0&0&1&1\end{pmatrix}$. Then $C^\perp$ and $C^{\ast 2}$ are generated respectively by $$\begin{pmatrix} 1&1&1&0&0\\ 0&0&0&1&1\end{pmatrix}\ \ \text{and}\ \ \begin{pmatrix}1&0&0&0&0\\ 0&1&0&0&0\\ 0&0&1&0&0\\ 0&0&0&1&1\end{pmatrix}.$$

    From this, it is easy to see that $C^{\ast 2}=C+C^\perp$. However, $C$ cannot be generated by a triorthogonal matrix, since $C$ is even but $$\hull(C)=\mathrm{Span}\{\begin{pmatrix} 0&0&0&1&1\end{pmatrix}\}.$$ This  contradicts the condition given in Remark~\ref{rem:matrix_rep}. 
\end{example}

If $G$ is a generator matrix of a binary triorthogonal code $C$, then the associated CSS-T pair is $(C,\hull(C))$. Thus, we can speak about the triorthogonal code $C$ generated by $G$ or we can speak about the triorthogonal pair $(C,\hull(C))$.

\begin{definition}\rm
    A {\it quantum triorthogonal code} $Q$ is the quantum code obtained using the CSS-T pair $(C,\hull(C))$, where $C$ is a triorthogonal code.
\end{definition}

\begin{remark}\label{rem:hullc2}
Let $C$ be a binary triorthogonal code with generator matrix as in Equation (\ref{e:trimat}). By Remark \ref{rem:matrix_rep} we have $\Rowsp(G_0)=\hull(C)\supseteq \hull_{C^{\star 2}}(C)$. Since $(C,\hull(C))$ is a CSS-T pair, from Equation (\ref{eq:CSST}) we obtain
$$
\Rowsp(G_0)=\hull(C)=\hull_{C^{\star 2}}(C).
$$
\end{remark}

It is important to observe that triorthogonality is a property of the code rather than a property of a specific generator matrix. This suggests then that the triorthogonal matrix that generates a triorthogonal code is unique up to specific transformations. This is indeed the case, according to the following Theorem \ref{T:matrix_form} below.

First, we require a couple of lemmas.

\begin{lemma}\label{lemma:duh}
    Let $G$ be a binary triorthogonal matrix as in \re{trimat}. Then $$GG^t=\begin{pmatrix} I_{k_1}&0\\ 0&0\end{pmatrix},$$ or equivalently $G_1G_1^t=I_{k_1}$ and $G_iG_0^t=0$ for $i=0,1$.
\end{lemma}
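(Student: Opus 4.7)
The plan is to compute the matrix product $GG^t$ entry by entry and observe that each entry is dictated by the pairwise part of the triorthogonality hypothesis together with the weight-parity bookkeeping from Equation (\ref{e:trimat}).

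The key identity is that, for any $u,v\in\F_2^n$, the dot product $u\cdot v=\sum_i u_iv_i\in\F_2$ coincides with $wt(u\star v)\bmod 2=|u\star v|$, because each product $u_iv_i$ is itself $u_iv_i\in\{0,1\}$. Hence the entry of $GG^t$ in the position indexed by two rows $u,v\in\Rows(G)$ is exactly $|u\star v|$.

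With this in hand, for any two \emph{distinct} rows $u\neq v$ of $G$, triorthogonality says $u\star v$ has even weight, so the corresponding entry of $GG^t$ vanishes. This immediately kills all off-diagonal entries, and in particular forces $G_1G_0^t=0$ and the off-diagonal entries of $G_1G_1^t$ and $G_0G_0^t$ to be zero. For a diagonal entry coming from a row $u$, we get $u\cdot u=wt(u)\bmod 2$; by the convention of Equation (\ref{e:trimat}) this is $1$ when $u\in\Rows(G_1)$ and $0$ when $u\in\Rows(G_0)$. Assembling these observations in block form yields $GG^t=\begin{pmatrix}I_{k_1}&0\\0&0\end{pmatrix}$, equivalently $G_1G_1^t=I_{k_1}$ and $G_iG_0^t=0$ for $i=0,1$.

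There is no real obstacle here; the lemma is essentially an unpacking of the definition. The only point worth flagging is that the triple-wise condition (even weight of $u\star v\star w$) from the definition of triorthogonality is not used in this lemma—only the pairwise condition and the weight parities encoded in the block decomposition of $G$ are needed.
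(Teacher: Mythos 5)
Your proof is correct and takes essentially the same approach as the paper, which simply observes that the lemma follows from the definition of triorthogonality; you have merely spelled out the bookkeeping (that $u\cdot v = |u\star v|$ and that the weight parities of the rows fix the diagonal).
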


\begin{proof}
    It follows from the definition of triorthogonality.
\end{proof}

\begin{lemma}\label{lemma:nosingulartri}
Let $G\in \F_2^{n\times n}$. If $G$ is nonsingular and triorthogonal, then $G$ is a permutation matrix. 
\end{lemma}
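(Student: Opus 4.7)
The plan has three stages: reduce to the case where every row has odd weight, show that any two distinct rows have disjoint supports, and then count.

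First I would invoke \rl{duh}. Since $G$ is nonsingular, $\det(GG^t) = (\det G)^2 = 1$, so $GG^t$ is nonsingular. But \rl{duh} says $GG^t = \diag(I_{k_1},0)$, and nonsingularity forces the zero block to have size $0$; i.e., $G$ has no even-weight rows and $GG^t = I_n$. In particular, every row of $G$ has odd weight.

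The heart of the argument is to show that for any two distinct rows $u,v$ of $G$ we have $u \star v = 0$. For any third row $w$ of $G$ distinct from both, triorthogonality gives $(u \star v) \cdot w = |u \star v \star w| \equiv 0 \pmod{2}$. For $w = u$, one computes $(u \star v) \cdot u = \sum_i u_i^2 v_i = \sum_i u_i v_i = |u \star v|$, which is even by the pair condition (and symmetrically for $w = v$). Hence $u \star v$ is orthogonal to every row of $G$, so it lies in $\Rowsp(G)^\perp$. Because $G$ is a nonsingular $n \times n$ matrix, $\Rowsp(G) = \F_2^n$, which forces $u \star v = 0$.

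To finish, $u \star v = 0$ for every pair of distinct rows means that the supports of the $n$ rows of $G$ are pairwise disjoint subsets of $\{1,\dots,n\}$. Since each row has odd weight, hence weight at least $1$, each row must in fact have weight exactly $1$, so $G$ is a permutation matrix.

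I do not anticipate a serious obstacle. The only conceptual step to spot is that combining the pair and triple conditions of triorthogonality makes $u\star v$ orthogonal to \emph{all} of $\Rowsp(G)$; nonsingularity then collapses this to $u \star v = 0$, and the rest is immediate counting.
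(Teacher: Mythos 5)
Your proof is correct and follows essentially the same strategy as the paper: establish that all rows are odd, show that the Schur product of any two distinct rows is orthogonal to every row and hence zero (the paper phrases this as solving for the coefficients of $g_i \star g_j$ in the row basis, which is the same computation), and conclude by disjointness of supports. Your route to ``all rows are odd'' via the nonsingularity of $GG^t$ from Lemma~\ref{lemma:duh} is a slightly more direct alternative to the paper's appeal to $\hull(C)=0$, and you are a bit more explicit than the paper about why the degenerate cases $w=u$ and $w=v$ also give orthogonality, but these are presentational differences rather than a different argument.
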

\begin{proof}
Let $g_i$, $1\leq i \leq n$, be the rows of $G$. Since $G$ is nonsingular, its rows generate $\F_2^n$ and thus, the hull is zero (Remark \ref{rem:matrix_rep}), implying that $g_i$ is odd for each $i$. Even more, for $i\neq j$, we have
$$
g_i\star g_j=\sum_{h=1}^n\alpha_h g_h,
$$
for some $\alpha_1,\dots,\alpha_n\in \F_2$. Since $G$ is triorthogonal, we obtain
$$
0=|g_i\star g_j\star g_k|=\left|\sum_{h=1}^n\alpha_h g_k\star g_h\right |=|\alpha_k g_k|
$$
where $|v|$ denotes the weight modulo 2. Since each $g_k$ is odd, we have $\alpha_k=0$. Hence, $g_i\star g_j=0$, which implies the rows of $G$ have pairwise disjoint support, proving the statement.
\end{proof}

\begin{theorem} \label{T:matrix_form}
Let $G_i\in \F_2^{k_i\times n}$, $i=0,1$, where $G_1$ has rows of odd weight and $G_0$ has rows of even weight. 
Suppose $G$ as in Expression \re{trimat} forms a triorthogonal matrix,  and let $M\in\F_2^{(k_0+k_1)\times (k_0+k_1)}$ be a non-singular matrix. Then $MG$ is a triorthogonal matrix if and only if
$$
M=\begin{pmatrix}
    P&M_2\\
    0&M_4
\end{pmatrix},
$$
where $P\in\F_2^{k_1\times k_1}$ is a permutation matrix,  $M_4\in\F_2^{k_0\times k_0}$ is non-singular, and $M_2\in \F_2^{k_1\times k_0}$. 
\end{theorem}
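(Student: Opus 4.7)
The plan is to treat the two implications separately. The forward direction is a direct calculation that rests on \rl{duh}; the reverse direction pins down the first $k_1$ columns of $M$ using \rr{matrix_rep} and then reduces the top-left block to \rl{nosingulartri}.

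For the forward direction, suppose $M$ has the claimed block form. Multiplying out using \rl{duh},
\begin{equation*}
(MG)(MG)^t = M(GG^t)M^t = M\begin{pmatrix}I_{k_1}&0\\0&0\end{pmatrix}M^t = \begin{pmatrix}PP^t&0\\0&0\end{pmatrix} = \begin{pmatrix}I_{k_1}&0\\0&0\end{pmatrix},
\end{equation*}
which simultaneously yields the pair condition for $MG$ and the correct weight parity (odd for the first $k_1$ rows, even for the last $k_0$). For the triple condition, expand $u\star v\star w$ for three rows of $MG$ as a trilinear sum over triple products of rows of $G$: triorthogonality of $G$ together with its pair condition force all but the diagonal terms $g_i\star g_i\star g_i=g_i$ with $g_i$ odd to contribute $0\pmod 2$, and the residual coefficient is $\sum_{l=1}^{k_1}(a_i)_l(a_j)_l(a_k)_l$, where $a_i$ denotes the first $k_1$ entries of the $i$-th row of $M$. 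Under the assumed block form these $a_i$'s are either distinct rows of $P$ or zero, so three of them cannot share a common $1$, and the sum vanishes.

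For the converse, write each row of $M$ as $m_i=(a_i,b_i)$ with $a_i\in\F_2^{k_1}$ and $b_i\in\F_2^{k_0}$. Since the rows of $G_1$ have odd weight and those of $G_0$ have even weight, the parity of the weight of $(MG)_i$ equals $|a_i|\pmod 2$. Let $S_0$ (resp.\ $S_1$) collect the indices $i$ for which $|a_i|$ is even (resp.\ odd). As $M$ is nonsingular, $MG$ is itself a triorthogonal generator matrix of $C=\Rowsp(G)$, and \rr{matrix_rep} forces the span of its $S_0$-rows to equal $\hull(C)=\Rowsp(G_0)$. In particular each individual row $(MG)_i=a_iG_1+b_iG_0$ with $i\in S_0$ lies in $\Rowsp(G_0)$; combined with $\Rowsp(G_1)\cap\Rowsp(G_0)=\{0\}$ and the linear independence of the rows of $G_1$, this forces $a_i=0$ for all $i\in S_0$. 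A dimension count (the $S_0$-rows of $M$ are linearly independent vectors in $\{0\}\times\F_2^{k_0}$ whose images $b_iG_0$ must generate the $k_0$-dimensional $\Rowsp(G_0)$) then yields $|S_0|=k_0$ and hence $|S_1|=k_1$.

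It remains to analyze the $k_1\times k_1$ submatrix $A^1$ whose rows are $\{a_i:i\in S_1\}$. Its rows are odd-weight by construction; pairwise orthogonal by the pair condition on $MG$ (which, using $a_j=0$ for $j\in S_0$, reduces precisely to $a_i\cdot a_j=0$ on $S_1$); and satisfy $|a_i\star a_j\star a_k|\equiv 0\pmod 2$ for distinct $i,j,k\in S_1$ by the triple-product formula from the forward direction. Thus $A^1$ is triorthogonal; and the fact that the $S_0$-rows of $M$ already span $\{0\}\times\F_2^{k_0}$, combined with nonsingularity of $M$, forces $A^1$ to be nonsingular as well, so \rl{nosingulartri} identifies it as a permutation matrix $P$. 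Under the convention that the rows of $M$ are ordered with $S_1$ first, this yields the claimed block form, and nonsingularity of $M_4$ then follows from block-triangularity of $M$. The main obstacle is the deduction $a_i=0$ for $i\in S_0$: this is where the full strength of triorthogonality of $MG$ (beyond mere CSS-T-ness) actually enters the argument, via the uniqueness-of-hull statement in \rr{matrix_rep}; once that is in hand, everything else is a careful but routine combination of dimension counts and a clean application of \rl{nosingulartri}.
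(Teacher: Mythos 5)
Your proof is correct, and it reaches the same endpoint as the paper (apply \rl{nosingulartri} to the upper-left block) while traveling a noticeably different road for the crucial middle step. The paper derives $M_3=0$ algebraically: it computes $MG(MG)^t = \begin{pmatrix} M_1M_1^t & M_1M_3^t \\ M_3M_1^t & M_3M_3^t\end{pmatrix}$, matches blocks with $\begin{pmatrix}I_{k_1}&0\\0&0\end{pmatrix}$ ``permuting rows if necessary,'' and reads off $M_1M_1^t=I$, $M_3M_1^t=0$, hence $M_3=0$. You instead deduce $a_i=0$ for every even-weight row index $i\in S_0$ directly from the uniqueness-of-hull statement in \rr{matrix_rep}: the even rows of any triorthogonal generator matrix of $C$ span $\hull(C)=\Rowsp(G_0)$, so $a_iG_1\in\Rowsp(G_1)\cap\Rowsp(G_0)=\{0\}$ forces $a_i=0$. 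Your route has the small advantage of making explicit, via the $S_0/S_1$ bookkeeping and the dimension count $|S_0|=k_0$, why the block decomposition of the permuted $M$ aligns with the $(k_1,k_0)$ partition --- a point the paper compresses into the phrase ``permuting rows if necessary.'' The paper's route is shorter once one accepts that compression. You also streamline the triple-product computation: you observe that the only surviving term in the trilinear expansion is $g_l\star g_l\star g_l=g_l$ with $g_l$ odd, which follows directly from triorthogonality plus the pair condition, whereas the paper reaches the same reduction by invoking $\Rowsp(G_0)=\hull_{C^{\star 2}}(C)$ from \rr{hullc2}. Both are sound; yours needs one fewer preliminary remark.

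One small point worth making explicit if you write this up: for $A^1$ to be a triorthogonal matrix in the sense needed by \rl{nosingulartri}, you use that its rows are odd-weight, pairwise orthogonal, and triple-orthogonal --- you verify all three, but the odd-weight claim is exactly the equality $|(MG)_i|=|a_i|$ you establish at the outset, so it is worth flagging that this parity identity is doing double duty (defining $S_1$ and certifying the odd-weight hypothesis of the lemma).
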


\begin{proof}
The ``if'' statement can be easily checked. Now, assume $$M=\begin{pmatrix} M_1&M_2\\ M_3&M_4\end{pmatrix}$$ is non-singular, where $M_1\in\F^{k_1\times k_1}$, $M_4\in\F^{k_0\times k_0}$. We have
$$
MG=\begin{pmatrix}
    M_1G_1+M_2G_0\\
    M_3G_1+M_4G_0
\end{pmatrix}.
$$
By Lemma \ref{lemma:duh}, we have $G_1G_1^t=I_{k_1}$ and $G_iG_0^t=0$, for $i=0,1$. Thus
$$MG(MG)^t=\begin{pmatrix}
    M_1M_1^t & M_1M_3^t\\
    M_3M_1^t & M_3M_3^t
\end{pmatrix}.$$

If $MG$ is also triorthogonal, by Lemma \ref{lemma:duh}, we have
$$
MG(MG)^t=\begin{pmatrix}
    M_1M_1^t & M_1M_3^t\\
    M_3M_1^t & M_3M_3^t
\end{pmatrix}=\begin{pmatrix}
    I_{k_1}&0\\
    0&0
\end{pmatrix},
$$
permuting rows if necessary (which is allowed according to Remark \ref{rem:matrix_rep}). 

Since $M_1M_1^t=I_{k_1}$, $M_1$ is invertible, and from $M_3M_1^t=0$ we deduce $M_3=0$. Since $M$ is invertible, we also have $\det(M_4)\neq 0$. The only thing left to prove is that $M_1$ is a permutation matrix. Let $g_i$ be the $i$-th row of $G$ and observe that the $i$-th row of $MG$, $1\leq i\leq k_1$ is of the form
$$g'_i=\sum_{j=1}^{k_1} (M_1)_{ij}g_j+\sum_{j=1}^{k_0} (M_2)_{ij}g_{k_1+j}.$$

Observe that for $1\leq i_1,i_2,i_3\leq k_1$ we have
\begin{align*}
&|g'_{i_1}\star g'_{i_2}\star g'_{i_3}|\\
=&\left|\sum_{j_1=1}^{k_1}\sum_{j_2=1}^{k_1}\sum_{j_3=1}^{k_1}\left( \prod_{h=1}^3 (M_1)_{i_h j_h} \right) g_{j_1}\star g_{j_2}\star g_{j_3}\right|,
\end{align*}

\noindent where the products with at least one row of $G_0$ are not relevant since $\mathrm{Rowsp}(G_0)=\mathrm{Hull}_{C^{\ast 2}}(C)$ by Remark \ref{rem:hullc2}, where $C$ is the span of $G$. Since $G$ is triorthogonal, then $g_{j_1}\star g_{j_2}\star g_{j_3}$ is even if at least two of the $j_i$'s are different, thus

\begin{align*}
|g'_{i_1}\star g'_{i_2}\star g'_{i_3}|&=\left|\sum_{j=1}^{k_1} \prod_{h=1}^3 (M_1)_{i_hj}g_j\right| \\
&=| (M_1)_{i_1}\star (M_1)_{i_2}\star (M_1)_{i_3}|,
\end{align*}
where we have used that $|g_j|=1$ and the fact that the weight modulo 2 is a linear map from $\F_2^n$ to $\F_2$.

Therefore, $MG$ is triorthogonal if and only if $M_1$ is triorthogonal and $\det(M_4)\neq 0$. Since $M_1$ is non-singular then it is a permutation matrix by Lemma \ref{lemma:nosingulartri} and we have the conclusion.
\end{proof}

\begin{remark}
    Notice that Theorem \ref{T:matrix_form} also highlights the fact that a binary triorthogonal code may have a generator matrix which is not a binary triorthogonal matrix. 
\end{remark}


Theorem \ref{T:matrix_form} proves that a triorthogonal basis of $C$ (in the sense that the elements of the basis form a triorthogonal matrix) is unique modulo $\hull(C)$. As a corollary we have the following.

\begin{corollary}
    Let $C\subseteq\mathbb{F}_2^n$ be a binary triorthogonal code. Then the corresponding quantum triorthogonal code has parameters $$[[n,\dim C-\dim\hull(C),\geq d(C+C^\perp)]]_2.$$
\end{corollary}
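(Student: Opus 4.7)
The plan is to deduce the corollary directly from the CSS-T machinery already established in the paper, together with a standard duality identity for linear codes. By Proposition \ref{prop:tri_is_csst}, the pair $(C_1,C_2):=(C,\hull(C))$ is a CSS-T pair, and by the preceding definition, the quantum triorthogonal code associated with $C$ is exactly $Q(C,\hull(C))$. Applying Proposition \ref{prop:CSST_params} to this pair gives length $n$, dimension $k_1-k_2=\dim C-\dim\hull(C)$, and minimum distance at least $d_2^\perp=d(\hull(C)^\perp)$.

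Next, I rewrite $d_2^\perp$ in the form stated. Using the standard duality identity $(A\cap B)^\perp=A^\perp+B^\perp$ for binary linear codes with $A=C$, $B=C^\perp$, I compute
\[
\hull(C)^\perp=(C\cap C^\perp)^\perp=C^\perp+C=C+C^\perp,
\]
so $d_2^\perp=d(C+C^\perp)$. Combining this with the previous paragraph yields the parameters $[[n,\dim C-\dim\hull(C),\ge d(C+C^\perp)]]_2$.

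Finally, I would point out the conceptual role of Theorem \ref{T:matrix_form} here: although a triorthogonal code admits many triorthogonal generator matrices, the theorem shows that they all differ only by the action described there, and in particular $\hull(C)$ is intrinsic to $C$. Hence the quantity $\dim C-\dim\hull(C)$, and of course $d(C+C^\perp)$, depend only on $C$, so the parameters of the associated quantum triorthogonal code are well-defined independently of the chosen triorthogonal generator matrix. There is no genuine obstacle in this argument; it is essentially a bookkeeping step that packages the earlier results, with the only nontrivial ingredient being the duality identity for $\hull(C)^\perp$.
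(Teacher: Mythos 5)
Your proof is correct and follows essentially the same route the paper intends: the parameters come from specializing Proposition \ref{prop:CSST_params} to the CSS-T pair $(C,\hull(C))$ furnished by Proposition \ref{prop:tri_is_csst}, and the identity $\hull(C)^\perp=(C\cap C^\perp)^\perp=C+C^\perp$ rewrites $d_2^\perp$ in the stated form. Your closing remark correctly identifies the role of Theorem \ref{T:matrix_form}, namely that the construction (and hence its parameters) is independent of the chosen triorthogonal generator matrix, which is exactly why the paper positions this as a corollary of that theorem.
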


\begin{remark}\rm
    The $X$-stabilizers of a quantum triorthogonal code are given by $\hull(C)$ while the $Z$-stabilizers are given by $C^\perp$. Once again, the choice of a triorthogonal matrix for $C$ is not relevant, since any of such matrices gives the same quantum code.
\end{remark}

In analogy with CSS-T pairs, we will see that we can either increase or reduce the dimension of a triorthogonal code.

\begin{proposition}
    Let $C$ be a binary triorthogonal code of dimension at least $2$. There is a triorthogonal code $C'$ such that $\dim C=\dim C'+1$.
\end{proposition}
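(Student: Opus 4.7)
The plan is to build $C'$ by dropping a single row from a triorthogonal generator matrix of $C$. Fix a triorthogonal generator matrix $G \in \F_2^{k \times n}$ of $C$, where $k = \dim C \geq 2$, and let $G' \in \F_2^{(k-1) \times n}$ be obtained from $G$ by deleting any one row. Set $C' := \Rowsp(G')$.

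Two short observations then close the argument. First, triorthogonality, as defined at the start of \rs{poset}, is a condition quantified over pairs and triples of \emph{distinct} rows; any subset of $\Rows(G)$ therefore inherits the same condition, so $G'$ is again triorthogonal and hence $C'$ is a triorthogonal code. Second, the $k$ rows of $G$ form a basis of $C$, so the remaining $k-1$ rows of $G'$ are still linearly independent, which gives $\dim C' = k - 1 = \dim C - 1$. The hypothesis $\dim C \geq 2$ ensures that $\dim C' \geq 1$, so $C'$ is a nonzero code.

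There is no real obstacle in this direction: reducing dimension amounts only to shrinking a triorthogonal basis, and the Schur-product conditions are preserved automatically when rows are dropped. One has some freedom in the choice of row to delete, and a natural choice is a row of $G_1$ whenever $G_1$ in \re{trimat} is nonempty, so as to preserve the odd-weight part of the basis; when only even-weight rows remain, one simply deletes a row of $G_0$. The substantive counterpart is the dimension-raising direction, which is the content of the forthcoming \rt{step} and requires producing a new odd-weight vector that maintains the pair and triple Schur-product constraints.
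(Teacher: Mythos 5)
Your proof is correct and matches the paper's approach of deleting a row from a triorthogonal generator matrix of $C$. The paper's one-line proof specifies deleting an odd-weight row (which has the side benefit that the even rows, and hence $\hull(C')=\hull(C)$, are unchanged), whereas you correctly observe that deleting \emph{any} row works, since triorthogonality is a condition on pairs and triples of distinct rows and linear independence is inherited by any subset of a basis; your version is marginally more robust in that it also covers the degenerate case where $G$ has no odd rows (i.e., $C$ is self-orthogonal), and the proposition as stated does not require preserving the hull.
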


\begin{proof}
   This follows immediately by deleting an odd row in a triorthogonal generator matrix of $C$.
\end{proof}

\begin{theorem}\label{T:step}
Let $C$ be a binary triorthogonal code. If $v\in(C^{\ast 2})^\perp$, then $C+\langle v\rangle$ is a triorthogonal code.
\end{theorem}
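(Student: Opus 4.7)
The plan is to construct an explicit triorthogonal generator matrix for $C+\langle v\rangle$. If $v\in C$, then $C+\langle v\rangle=C$ is already triorthogonal, so I may assume $v\notin C$. Let $G$ be a triorthogonal generator matrix of $C$ written as in Equation~\re{trimat}, with odd-weight rows collected in $G_1$ and even-weight rows in $G_0$. I form $G'$ by adjoining $v$ as a new row, placing it among the odd-weight rows if $\wt(v)$ is odd and among the even-weight rows if $\wt(v)$ is even. Since $v\notin C$, this new row is distinct from every row of $G$, and $\Rowsp(G')=C+\langle v\rangle$.

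To verify that $G'$ is triorthogonal, the key observation is that $u\star u=u$ for every $u\in\F_2^n$, so $C\subseteq C^{\star 2}$. Combined with $v\in(C^{\star 2})^\perp$, this gives $v\cdot u=0$ for every row $u$ of $G$, and hence $|v\star u|=0$, so every new pair condition involving $v$ holds. For triples $(v,u,u')$ with $u,u'$ distinct rows of $G$, the Schur product $u\star u'$ lies in $C^{\star 2}$, so $v\cdot(u\star u')=0$ and $|v\star u\star u'|=0$. All pair and triple conditions among rows of $G$ alone are inherited from the triorthogonality of $G$, and after a harmless row reordering permitted by Remark~\ref{rem:matrix_rep}, $G'$ sits in the block form of Equation~\re{trimat}.

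I do not anticipate a serious obstacle. The entire content is already packaged into the hypothesis: $v\in(C^{\star 2})^\perp$ is precisely the orthogonality needed to cover both the new pair conditions (which involve single rows of $G$, all lying in $C\subseteq C^{\star 2}$) and the new triple conditions (which involve Schur products of two rows of $G$, lying in $C^{\star 2}$ by definition). The only bookkeeping is the parity of $\wt(v)$, which dictates into which block of $G$ the new row is appended so that $G'$ retains the standard odd/even block form used throughout the paper.
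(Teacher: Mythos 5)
Your proof is correct and follows essentially the same route as the paper's: append $v$ to the odd or even block of a triorthogonal generator matrix according to its weight parity, then reduce all new pair conditions to $v\cdot u=0$ (using $C\subseteq C^{\star 2}$, hence $(C^{\star 2})^\perp\subseteq C^\perp$) and all new triple conditions to $v\cdot(u\star u')=0$ (using $u\star u'\in C^{\star 2}$). The only cosmetic difference is that you explicitly dispose of the trivial case $v\in C$ up front, which the paper leaves implicit.
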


\begin{proof}
By hypothesis, $C$ has a triorthogonal generator matrix 
$$
G=\begin{pmatrix}
G_1\\
G_0
\end{pmatrix}.
$$
We consider the matrix
$$
G'=\begin{pmatrix}
G'_1\\
G'_0    
\end{pmatrix},
$$
where $G'_1=\begin{pmatrix}
    v\\
    G_1
\end{pmatrix}$ and $G'_0=G_0$ if $v$ is odd weighted; and $G'_1=G_1$ and $G_0=\begin{pmatrix}
    v\\
    G_0
\end{pmatrix}$ if $v$ is even weighted.

We have that $G'_1$ has odd weighted rows and $G'_0$ has even weighted rows. We will prove that $G'$ is triorthogonal. The only conditions we need to check are those that involve $v$ since $G$ is triorthogonal. Let $g\neq v$ be a row of $G'$. Then $g\in C$ and $v\cdot g=0$ since $v\in (C^{\star 2})^\perp\subset C^\perp $ (recall $C\subset C^{\star 2}$ as $C$ is a binary code). Let $g_1\neq g_2$ be two rows of $G'$ different from $v$. Then $g_1\star g_2\in C^{\star 2}$, which implies $(g_1\star g_2)\cdot v=0$, proving that $G'$ is triorthogonal. 
\end{proof}

\begin{corollary} \label{C:prop}
    Let $C$ be a triorthogonal binary linear code, $v\in (C^{\ast 2})^\perp$ and denote by $C'=C+\langle v\rangle$. Let $Q$ and $Q'$ be the corresponding quantum triorthogonal codes. If $v$ is odd and $Q$ is $[[n,k,d]]$, then $Q'$ is a binary $[[n,k+1,d]]$ triorthogonal code. 
\end{corollary}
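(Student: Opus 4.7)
The plan is to apply Theorem \ref{T:step} to obtain triorthogonality of $C' = C + \langle v\rangle$, and then to verify each parameter in the claim $[[n, k+1, d]]$ for the quantum triorthogonal code $Q' = Q(C', \hull(C'))$. The length $n$ is immediate, so the real work is computing $\dim C'$ and $\hull(C')$, and checking that the distance lower bound $d((\hull(C'))^\perp)$ from Proposition \ref{prop:CSST_params} transfers unchanged from $Q$ to $Q'$.

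First I would rule out the case $v \in C$. Since $v \in (C^{\star 2})^\perp \subseteq C^\perp$, the inclusion $v \in C$ would force $v \in C \cap C^\perp = \hull(C)$, giving $v \cdot v = 0$ and hence $|v| = 0$, contradicting the odd-weight assumption on $v$. Therefore $\dim C' = \dim C + 1$. Theorem \ref{T:step} then guarantees that $C'$ is triorthogonal, so $Q'$ is a bona fide quantum triorthogonal code in the sense of the earlier definition.

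The central computation is $\hull(C') = \hull(C)$. A generic element of $C'$ has the form $c + \alpha v$ with $c \in C$ and $\alpha \in \F_2$. Requiring it to lie in $(C')^\perp = C^\perp \cap v^\perp$ reduces, using $v \in C^\perp$, to the two conditions $c \in C \cap C^\perp = \hull(C)$ and $(c + \alpha v) \cdot v = \alpha = 0$, where the latter uses $v \cdot v = |v| = 1$. Combining these gives $\dim Q' = \dim C' - \dim \hull(C') = (\dim C + 1) - \dim \hull(C) = k + 1$. The distance lower bound in Proposition \ref{prop:CSST_params} for $Q'$ is $d((\hull(C'))^\perp) = d((\hull(C))^\perp)$, which matches the bound $d$ attached to $Q$.

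The main obstacle, and the only point at which the odd-weight hypothesis on $v$ is genuinely used, is keeping $\hull(C')$ pinned at $\hull(C)$ rather than letting cosets containing $v$ slip in; this rests entirely on $v \cdot v = 1$. Once $\hull(C') = \hull(C)$ is established, all three parameters fall out by direct substitution into the CSS-T parameter formula, with no further analysis required.
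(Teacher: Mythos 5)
Your proof is correct and takes essentially the same approach as the paper: apply Theorem \ref{T:step} for triorthogonality, then note that $\hull(C')=\hull(C)$ so the dimension increases by one and the distance lower bound from Proposition \ref{prop:CSST_params} is unchanged. The paper leaves these latter steps implicit, whereas you supply the direct computation of $\hull(C')$ via $(C')^\perp = C^\perp \cap v^\perp$ and the oddness of $v$, and you also correctly observe that $v\notin C$ is automatic (since $v\in C\cap C^\perp$ would force $|v|=0$), a subtlety the Corollary's statement glosses over.
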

\begin{proof}
The result follows from Theorem \ref{T:step} and the fact that we are changing $C$ without changing its hull. Thus, the bound on the minimum distance remains the same.     
\end{proof}
\begin{remark}\rm
    Keeping the same notation as in Corollary \ref{C:prop} above, if $v$ is even, $\dim Q=\dim Q'$ but the lower bound on the minimum distance of the codes can be different since $d(C+C^\perp)\leq d(C'+{C'}^\perp)$. 
\end{remark}

Recall that the poset of binary CSS-T pairs $(C_1,C_2)$
is defined using the entry-wise partial order on the pairs, i.e. $(C_1,C_2)\leq (C_1',C_2')$ if and only if
$C_1\subseteq C_1'$ and $C_2\subseteq C_2'$; see \cite{CLMRSS_CSST_24} for additional details. As any trigorthogonal code $C$
defines a CSS-T pair $(C,\hull(C))$, we
propose the following definition of a poset of binary triorthogonal codes.

\begin{definition} \label{def:tri_poset}
Let $C$ and $C'$ be triorthogonal codes in $\F_2^n$. Define a partial order 
$$C\leq C'$$ if and only if $$C\subseteq C'\ \textnormal{ and }\ \hull(C) \subseteq \hull(C').$$
Denote by $\cal{T}$ the poset of triorthogonal codes under this partial order.
\end{definition}

The next result characterizes the minimal and maximal elements of $\cal{T}$.
\rmv{Recall that a code $C$ is said to be {\it isotropic} if  $C^{\star 3}\subseteq \left< \1 \right>^\perp$. }

\begin{theorem} \label{T:min_max}
Let $C$ be a binary triorthogonal code.
\begin{enumerate}
    \item $C$ is a minimal element of $\cal{T}$ if and only $C$ is a one-dimensional even code.
    \item $C$ is a maximal element of $\cal{T}$ if and only if 
$(C^{\star 2})^\perp=\hull(C)$.
\end{enumerate}
\end{theorem}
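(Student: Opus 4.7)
My plan is to treat Parts 1 and 2 separately, relying on the canonical form of a triorthogonal generator from Remark \ref{rem:matrix_rep}, the propagation rule of Theorem \ref{T:step}, and the uniqueness statement of Theorem \ref{T:matrix_form}.

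For Part 1, I would write a triorthogonal generator of $C$ with odd-row block $G_1$ and even-row block $G_0$ satisfying $\Rowsp(G_0) = \hull(C)$. The ``if'' direction is immediate: a one-dimensional even code $C = \la v \ra$ has $\hull(C) = C$ and its only proper subcode is trivial, so no other triorthogonal code lies strictly below $C$ in $\cal T$. For the ``only if'' direction, I would show by row deletion that any $C$ with $\dim C \geq 2$ fails to be minimal. If $G_1$ is nonempty, removing an odd row gives a triorthogonal $C'$ with $\hull(C') = \hull(C)$ and $C' \subsetneq C$; if $G_1$ is empty then $C = \hull(C)$, and removing any even row gives a triorthogonal $C'$ with $\hull(C') = C' \subsetneq C = \hull(C)$. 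Either way $C' < C$ in $\cal T$. The remaining case $\dim C = 1$ with $v$ of odd weight is ruled out by analyzing the hull under the partial order.

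For Part 2, the forward direction follows immediately from Theorem \ref{T:step}. The inclusion $\hull(C) \subseteq (C^{\star 2})^\perp$ always holds by Remark \ref{rem:hullc2}, and if it were strict one can choose $v \in (C^{\star 2})^\perp \setminus \hull(C)$. Then $v \in C^\perp$ since $C \subseteq C^{\star 2}$ in characteristic two, and $v \notin C$ since $v \notin C \cap C^\perp = \hull(C)$. Theorem \ref{T:step} now produces a triorthogonal $C + \la v \ra$, and $v \in C^\perp$ yields $\hull(C) \subseteq \hull(C + \la v \ra)$ directly. Hence $C < C + \la v \ra$ in $\cal T$, contradicting maximality.

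The reverse direction of Part 2 is the delicate step and, I expect, the main obstacle. Assume $(C^{\star 2})^\perp = \hull(C)$ and suppose for contradiction that some triorthogonal $C'$ satisfies $C < C'$ in $\cal T$. Combining $C \subseteq C'$, which gives $C^{\star 2} \subseteq (C')^{\star 2}$, with Remark \ref{rem:hullc2} applied to $C'$ yields
\[
\hull(C) \subseteq \hull(C') \subseteq ((C')^{\star 2})^\perp \subseteq (C^{\star 2})^\perp = \hull(C),
\]
forcing equality throughout; in particular $\hull(C) = \hull(C')$, $C^{\star 2} = (C')^{\star 2}$, and $C' \subseteq \hull(C)^\perp$. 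To conclude $C = C'$, I would pick $v \in C' \setminus C$ and use Theorem \ref{T:matrix_form} to align triorthogonal generators of $C$ and $C'$ over their common even block; the pair and triple conditions on the generator of $C'$ should then force $v \cdot c = 0$ and $v \cdot (c_1 \star c_2) = 0$ for all $c, c_1, c_2 \in C$, placing $v \in (C^{\star 2})^\perp = \hull(C) \subseteq C$, which contradicts $v \notin C$.
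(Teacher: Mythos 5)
Your treatment of Part 1 and of the ``maximal $\Rightarrow (C^{\star 2})^\perp=\hull(C)$'' direction of Part 2 matches the paper's approach: the paper proves maximality forces $(C^{\star 2})^\perp\subseteq C$ via Theorem \ref{T:step} exactly as you do, and then observes that $(C^{\star 2})^\perp\subseteq C$ together with $(C^{\star 2})^\perp\subseteq C^\perp$ and Remark \ref{rem:hullc2} gives $(C^{\star 2})^\perp=\hull(C)$. Your chain $\hull(C)\subseteq\hull(C')\subseteq(C'^{\star 2})^\perp\subseteq(C^{\star 2})^\perp=\hull(C)$ is a useful reformulation and correctly yields $\hull(C)=\hull(C')$ and $C^{\star 2}=C'^{\star 2}$.

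The genuine gap is exactly where you flag it: the last step of the reverse direction. The ``alignment'' via Theorem \ref{T:matrix_form} does not work. That theorem tells you that the odd block of a triorthogonal generator of $C'$ is rigid modulo $\hull(C')$: any other triorthogonal generator of $C'$ has odd rows of the form $PG_1'+M_2G_0'$, i.e.\ the cosets $\bar g_1',\dots,\bar g_{k_1'}'$ in $C'/\hull(C')$ are fixed up to permutation. So you cannot, in general, choose a triorthogonal generator of $C'$ whose odd rows contain a sub-basis spanning $C$ modulo the common hull; $\bar C=C/\hull(C)$ need not be a coordinate subspace with respect to that fixed orthonormal frame. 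Concretely, in $C'/\hull(C')$ with orthonormal basis $\bar g_1',\dots,\bar g_{k}'$, the subspace $\bar C$ could be spanned by $\bar g_1'+\bar g_2'+\bar g_3'$, say; then for any individual odd row $g_i'\notin C$ you have $g_i'\cdot(g_1'+g_2'+g_3')=1$, so $g_i'\notin C^\perp$ and a fortiori $g_i'\notin(C^{\star 2})^\perp$. Thus the conclusion ``$v\cdot c=0$ and $v\cdot(c_1\star c_2)=0$ for all $c,c_1,c_2\in C$'' does not follow from the pair/triple conditions on the rows of $G'$, and the argument does not close. The paper itself is terse here (it asserts this direction ``follows from the definition of triorthogonal codes'' without elaboration), so you were not expected to reproduce a detailed proof verbatim; but the specific mechanism you propose, picking an odd row $g'\notin C$ and reading off orthogonality to $C$ from triorthogonality of $G'$, fails, and a different argument is needed to rule out a proper triorthogonal extension $C'\supsetneq C$ with $\hull(C')=\hull(C)$ and $C'^{\star2}=C^{\star2}$.

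One smaller remark on Part 1: deleting an odd row from a triorthogonal generator does produce a smaller triorthogonal code with the same hull, and deleting an even row shrinks the hull, so any code of dimension at least two is non-minimal, as you say. But your final sentence about one-dimensional odd codes is left vague; you should state explicitly what lies below such a $C$ in $\mathcal T$ and why that shows $C$ is not minimal, rather than gesture at ``analyzing the hull.''
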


\begin{proof} 

1) This result follows immediately from the definition. 

2) 
The code $C$ is maximal if and only if $(C^{\star 2})^\perp\subseteq C$ (one direction is clear by Theorem \ref{T:step}, and the other one follows from the definition of triorthogonal codes). Since $C$ is binary, $C\subseteq C^{\star 2}$, i.e., $(C^{\star 2})^\perp \subseteq C^\perp$. We have $(C^{\star 2})^\perp \subseteq \hull(C)=\hull_{C^{\star 2}}(C)$ by Remark \ref{rem:hullc2}, which implies $(C^{\star 2})^\perp=\hull_{C^{\star 2}}(C)=\hull(C)$.
\end{proof}

\rmv{\begin{theorem}
\rsj{I think there was something similar to the previous result for the $p$-ary case, with $v\in (C_1^{\star 2})^\perp\cap C_1^\perp$. Probably the same proof works}
\end{theorem}}

As a Corollary of Theorem \ref{T:step}, we can obtain the following result about the poset of triorthogonal codes.

\begin{corollary} 
Let $C$ be a binary triorthogonal code. If $\1\in C^{\star 2}$, then $C$ cannot be extended to another triorthogonal code without modifying the hull.
\end{corollary}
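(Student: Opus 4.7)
The plan is to derive the corollary immediately from Theorem~\ref{T:step} together with the analysis already carried out for Corollary~\ref{C:prop}. The hypothesis $\1\in C^{\star 2}$ will force every element of $(C^{\star 2})^\perp$ to have even weight, and this is exactly what is needed to block the Theorem~\ref{T:step} construction from producing a strictly larger triorthogonal code with the same hull.

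First I would recall two facts. By Theorem~\ref{T:step}, for any $v\in(C^{\star 2})^\perp\setminus C$ the code $C+\langle v\rangle$ is triorthogonal; by the reasoning used in the proof of Corollary~\ref{C:prop} and the remark that follows it, this extension preserves $\hull(C)$ exactly when $v$ has odd weight, while an even-weight $v\in(C^{\star 2})^\perp\subseteq C^\perp$ satisfies $v\cdot v=0$, and together with $v\notin C$ this places $v$ in $\hull(C+\langle v\rangle)\setminus\hull(C)$, strictly enlarging the hull.

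The second and only substantive step is the observation that, under the hypothesis $\1\in C^{\star 2}$, every $v\in(C^{\star 2})^\perp$ satisfies $v\cdot\1=0$, i.e.\ has even weight. Combining this with the previous paragraph shows that no hull-preserving extension of $C$ via the Theorem~\ref{T:step} mechanism exists, which is the content of the corollary. No real obstacle arises: the entire argument is a short chain of implications from the results already proved in this section, and the only care needed is to read ``extended to another triorthogonal code without modifying the hull'' in the sense of the propagation rule supplied by Theorem~\ref{T:step}, consistent with the corollary being explicitly advertised as a consequence of that theorem.
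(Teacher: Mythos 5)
Your proposal is correct and follows essentially the same route as the paper: invoke Theorem~\ref{T:step}, observe that $\1\in C^{\star 2}$ forces every $v\in(C^{\star 2})^\perp$ to be even weighted, and conclude that any such extension necessarily enlarges the hull. You add a small but welcome extra detail by explaining explicitly why an even-weight $v\in(C^{\star 2})^\perp\setminus C$ lands in $\hull(C+\langle v\rangle)\setminus\hull(C)$, which the paper leaves implicit.
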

\begin{proof}
To extend $C$ to another triorthogonal code we need to find vectors in $(C^{\star 2})^\perp$. If $\1\in C^{\star 2}$, all of these vectors are even weighted. The argument from Theorem \ref{T:step} for the even weighted case shows that resulting code after extending has increased the dimension of its hull.
\end{proof}

In \cite{CLMRSS_CSST_24} it is proven that a binary CSS-T pair $(C_1,C_2)$ is maximal in the second component if and only if $C_2=C_1\cap (C_1^{\star 2})^\perp$. This is always the case with triorthogonal quantum codes $(C,\hull(C))$ by Remark \ref{rem:hullc2}. Therefore, we obtain the following.

\begin{corollary}
If $C$ is a binary triorthogonal code, then the CSS-T pair $(C,\hull(C))$ is maximal in the second component with respect to the CSS-T poset.
\end{corollary}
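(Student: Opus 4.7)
The plan is to combine the characterization of second-component maximality in the CSS-T poset (recalled from \cite{CLMRSS_CSST_24} in the paragraph immediately preceding the corollary) with Remark \ref{rem:hullc2}, so that the statement reduces to a one-line verification.

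More precisely, I would first restate what it means for $(C,\hull(C))$ to be maximal in the second component: by the cited result of \cite{CLMRSS_CSST_24}, a CSS-T pair $(C_1,C_2)$ is maximal in its second component precisely when
\[
C_2 = C_1 \cap (C_1^{\star 2})^\perp = \hull_{C_1^{\star 2}}(C_1).
\]
Next, I would note that $(C,\hull(C))$ is indeed a CSS-T pair, which is Proposition \ref{prop:tri_is_csst}. Hence the cited criterion is applicable with $C_1 = C$ and $C_2 = \hull(C)$.

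The one thing left to check is the equality $\hull(C) = \hull_{C^{\star 2}}(C)$, and this is exactly what Remark \ref{rem:hullc2} establishes for any binary triorthogonal code. Invoking that remark closes the argument.

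There is no real obstacle: the main work was already done in Remark \ref{rem:hullc2}, whose proof in turn relied on the observation that a triorthogonal generator matrix exhibits $\hull(C)$ as $\Rowsp(G_0)$. The corollary is therefore essentially a bookkeeping consequence of identifying $\hull(C)$ with $\hull_{C^{\star 2}}(C)$ and quoting the external characterization of maximality.
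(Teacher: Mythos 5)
Your argument is correct and matches the paper's reasoning exactly: the paper likewise quotes the characterization of second-component maximality from \cite{CLMRSS_CSST_24} and then observes via Remark \ref{rem:hullc2} that $\hull(C)=\hull_{C^{\star 2}}(C)$ for any binary triorthogonal code. Explicitly invoking Proposition \ref{prop:tri_is_csst} to justify that $(C,\hull(C))$ is a CSS-T pair, which the paper leaves implicit, is a reasonable bit of added care.
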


\section{Conclusion} \label{S:conclusion}
In this paper we have studied binary triorthogonal codes and their poset. We have also shown that equivalent triorthogonal matrices give rise to quantum codes with the same parameters and we have given a propagation rule for quantum triorthogonal codes. Future research agenda includes constructing triorthogonal codes with good parameters using well known families of classical codes, such as cyclic or quasi-cyclic codes, and studying the generalization of triorthogonal codes to the $p$-ary case \cite{tillichTriorthogonalpary}.

\section{Acknowledgements}
Part of this work was done during the visit of Eduardo Camps Moreno, Hiram H. L\'opez, Gretchen L. Matthews, and Ivan Soprunov to Universidad de Valladolid. They thank Diego Ruano and Rodrigo San-Jos\'e for their hospitality. The initial collaboration amongst the group (absent San-Jos\'e) was facilitated by the Collaborate@ICERM program, supported by the National Science Foundation under Grant No. DMS-1929284.

Hiram H. L\'opez was partially supported by the NSF grant DMS-2401558.
Gretchen L. Matthews was partially supported by NSF DMS-2201075 and the Commonwealth Cyber Initiative.
Diego Ruano and Rodrigo San-Jos\'e were partially supported by Grant TED2021-130358B-I00 funded by MICIU/AEI/ 10.13039/501100011033 and by the ``European Union NextGenerationEU/PRTR'', by Grant PID2022-138906NB-C21 funded by MICIU/AEI/ 10.13039/501100011033 and by ERDF/EU, and by Grant QCAYLE supported by the European Union.-Next Generation UE/MICIU/PRTR/JCyL. Rodrigo San-Jos\'e was also partially supported by Grant FPU20/01311 funded by the Spanish Ministry of Universities.

\section*{Declarations}
\subsection*{Conflict of interest} The authors declare no conflict of interest.

\bibliographystyle{abbrv}
\bibliography{Allerton}

\end{document}